\documentclass[letterpaper, 12pt]{article}

\title{On the Multidimensional Stable Marriage Problem}
\author{Jared Duker Lichtman \\ \small{Dartmouth College} \\ \footnotesize{\texttt{lichtman.18@dartmouth.edu}}}
\date{} 

\usepackage[paperwidth = 8.5in, paperheight = 11in, margin = 1in]{geometry}
\usepackage{tikz}
\usepackage{multirow}
\usepackage{amsthm}
\usepackage{amsmath}
\usepackage{amsfonts}
\usepackage{upgreek}


\begin{document}


\maketitle

\abstract{We provide a problem definition of the stable marriage problem for a general number of parties $p$ under a natural preference scheme in which each person has simple lists for the other parties. We extend the notion of stability in a natural way and present so called elemental and compound algorithms to generate matchings for a problem instance. We demonstrate the stability of matchings generated by both algorithms, as well as show that the former runs in $O(pn^2)$ time.}

\section{Introduction}

\hspace*{\parindent} The stable marriage problem (SM) is a famous problem in mathematics in which there exists a community of $n$ men and $n$ women, all of whom are to be paired with each other heterogeneously in marriage.  Each individual provides a complete preference list ranking the members of the opposite sex according to his or her preference for marriage. The final matching sought is one in which no two people would rather be married to each other over their current spouses, and is thus called \textbf{stable}. It is of note that stability is a heuristic approach to optimizing matching.

In 1962, David Gale and Lloyd Shapley \cite{gs} presented an algorithm to solve SM. In the Gale-Shapley (GS) algorithm, each man proposes to his favorite woman, and each woman is temporarily matched to the man that proposes to her.  If a woman is proposed to by multiple men, she is temporarily matched to the proposing man she prefers the most. If there remain any unmatched people after the first round of matching, the single men propose again, this time to their second choices.  If a woman who has already been matched is proposed to by a man whom she prefers over her current partner, she will leave her current partner and become paired with the new man.  This process is repeated, with each unmatched man proposing to the women in order of his preference list and each woman choosing her best possible mate who has proposed to her, until everyone has been matched.

In \cite{gs}, the following properties of the GS algorithm were shown:
\begin{itemize}
\item It terminates
\item It is $O(n^2)$; maximal number of rounds is $n^2 - 2n + 2$
\item Resultant matching is stable
\item matching is optimal (of stable) for proposing party
\item matching is pessimal (of stable) for responding party
\end{itemize}

For a thorough introduction to the stable marriage problem, including the Gale-Shapley algorithm, we refer the reader to \cite{gusfieldirving}. 

In 1976, Donald Knuth proposed twelve open questions \cite{knuth} on SM, one of which asked to generalize SM from two to three parties---the 3-dimensional SM (3DSM). Given the open-ended wording, the question has been addressed under various interpretations with respect to structure of preferences and definitions of stability. With the addition of indifference in preferences \cite{irving}, there have been many different constructions of 3DSM, most of which have been shown to be $\mathcal{NP}$-complete \cite{nghirsch}\cite{huang}.

However, 3DSM has been shown to work under a simple scheme \cite{danilov} in which each individual provides two simple preference lists for the other two parties. Along this vein, this paper will consider the $p$-dimensional SM ($p$DSM), where $p \geq 2$, and propose two types of algorithms to deal with it. Before presenting the algorithms in their entirety, we motivate them with the simplest nontrivial consideration, when $p=3$.

In an instance of 3DSM, we have a community of men, women, and dogs. One possible way to match the men, women, and dogs together is to choose two parties, say men and women, and create a matching between them using the Gale-Shapley algorithm.

Repeating this with another pair of parties, say women and dogs, we arrive at sets of man-woman and woman-dog pairs, from which we can deduce a matching for the entire community. This is the idea motivating the \textbf{elemental algorithm}.

Alternatively, we can take the man-woman pairs and view each as members of a compound party, the humans. Each human's preferences are constructed by combining each (associated) man's and woman's preferences for dogs. Similarly, we can modify each dog's preferences by combining its preference of men and women into preferences for humans. This is the idea motivating the \textbf{compound algorithm}.

Here the names elemental and compound are chosen in allusion to chemistry; elements are the fundamental building blocks, and compounds are created by bonding elements together. In the elemental algorithm, each party is treated as an individual element in ``pure" form, whereas in the compound algorithm, parties are ``bonded" together via stable matchings.

\section{Problem Definition}

\hspace*{\parindent} We proceed with a formal construction of the multidimensional stable marriage problem. An instance of the $p$-dimensional stable marriage problem ($p$DSM) is an ordered pair $(\mathcal{P}, L)$ with a set $\mathcal{P}$ of $p$ disjoint parties with $n$ elements each, and a set $L$ of preference lists for every element (to be defined below). An element of a given party is said to be a \textbf{member} of that party. Let $U_\mathcal{P} =\bigcup_{P \in \mathcal{P} } P$ be the \textbf{community} of $\mathcal{P}$. For all $x \in U_\mathcal{P}$, let $prt(x)$ return the party of which $x$ is a member.

Associated with each $x \in U_\mathcal{P}$ is a $p-1 \times n$ \textbf{strictly ordered preference array} $L_x$  of $x$'s preferences defined as follows. For all $y \in P \in \mathcal{P}$ where $x \notin P$, let $L_x(y)=j$ denote that $y$ is $x$'s $j^{th}$ preferred member of $P$. Let 
$$L_x(P)=\{ y \in P\: |\: \text{strictly ordered by } L_x(y) \}$$
$$L_x=\{ L_x(P) \: | \: P \in \mathcal{P} \}$$
$$L=\{ L_x \: | \: x \in U_\mathcal{P} \}$$
Furthermore, given $a,b \in P$, let $a \succ_x b$ denote that $L_x(a) < L_x(b)$ and let $a \succeq_x b$ denote that $L_x(a) \leq L_x(b)$.

A \textbf{family} $F \subseteq U_\mathcal{P}$ is a set of $p$ elements, one member from each party. A \textbf{matching} $\mathcal{F}$ over $\mathcal{P}$ is a partition of $U_\mathcal{P}$ into $n$ families. Elements of a single family are said to be \textbf{relatives} in $\mathcal{F}$. Let $rel_\mathcal{F}(x, P)$ return $x$'s relative in $\mathcal{F}$ from $P$. 

For a given $\mathcal{F}$, a family $F \notin \mathcal{F}$ is \textbf{blocking} if and only if 
\begin{itemize}
\item $x \succeq_y rel_\mathcal{F}\big( y, prt(x)\big) \; , \; y \succeq_x rel_\mathcal{F}\big( x, prt(y)\big) \quad \text{ for all } x,y \in F$
\item for each $x$, there exists $z \in F$ such that $z \succ_x rel_\mathcal{F}\big( x, prt(z)\big)$
\end{itemize}
A matching $\mathcal{F}$ is \textbf{unstable} if there exists a blocking family in $U_\mathcal{P}$. $\mathcal{F}$ is otherwise \textbf{stable}.

There are several possible ways to define the problem, with regards to both manner of describing preferences and definition of stability. The manner of preferences was chosen, in part, because of its simplicity. In 3DSM, simple preference lists leads to an efficient algorithm, while more permutation-oriented (Cartesian product) setups have generated complex problems proven to be $\mathcal{NP}$-complete. The definition of stability chosen resembles that of the traditional problem most closely. This is because if there is a blocking family of elements who all prefer each other to their corresponding partners in $\mathcal{F}$, they would all ``elope" and desert their established families in real life. Whereas setups that permit indifference among individuals have led to $\mathcal{NP}$-complete problems as well \cite{irving}.

Two types of algorithms will be presented for $p$DSM, elemental and compound, both of which are novel extensions of the original GS algorithm. Given any algorithm $A$, let $\langle A\rangle$ denote the matching generated by $A$. Given a set of algorithms $\mathcal{A}$, let $\langle\mathcal{A}\rangle = \{ \langle A \rangle \: | \: A \in \mathcal{A} \}$.

\section{Elemental Algorithms}
A tentative definition of an elemental algorithm will initially be provided, which will prompt a deeper understanding and a more rigorous definition.
Let $GS(P, Q)$ denote that $P$ proposes to $Q$ according to the GS algorithm. An important way of viewing $GS(P, Q)$ is that it establishes a bijection between $P$ and $Q$. 

An \textbf{elemental algorithm} is a set $\upvarepsilon$ of bijections $GS(R, S)$ such that for all $P, Q \in \mathcal{P}$ there exists a unique bijection (either directly, or indirectly by composition) between $P$ and $Q$. To execute $\upvarepsilon$, each element in $\upvarepsilon$ is executed, generating a unique 1-1 correspondence between each pair of parties, and thus a matching for the problem.

In SM, the GS algorithm is executed on the pair of men and women. This may be viewed as a directed graph with a vertex for each party and an edge directed from the proposing into the responding vertex. Unless otherwise specified, all graphs are assumed to be simple and labeled.

In $p$DSM, these graphs may be similarly constructed. Given an elemental algorithm $\upvarepsilon$, an \textbf{elemental graph} $G=\{ V,E\}$ is generated according to the following bijections.
$$\mathcal{P} \rightarrow V : P \mapsto v_p \qquad \quad \upvarepsilon \rightarrow E : GS(P,Q) \mapsto e_{pq}$$

The following theorem will provide a basis for the formal definition of an elemental algorithm.

\newtheorem{thm}{Theorem}
\begin{thm}
$G$ is an elemental graph $\Longleftrightarrow G$ is a tree.
\end{thm}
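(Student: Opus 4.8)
The plan is to prove this by translating between the algebraic language of composed bijections and the combinatorial language of paths in $G$, after which the claim reduces to the standard characterization of a tree as a graph in which any two vertices are joined by a unique path. First I would fix the dictionary supplied by the two defining bijections $P \mapsto v_p$ and $GS(P,Q) \mapsto e_{pq}$: an edge $e_{pq}$ of $G$ is exactly a direct bijection between $P$ and $Q$, and a path $v_p = v_{r_0}, v_{r_1}, \dots, v_{r_k} = v_q$ in $G$ corresponds to the indirect bijection obtained by composing the GS bijections along its edges. Thus ``there exists a bijection between $P$ and $Q$'' is equivalent to ``$v_p$ and $v_q$ lie in the same connected component,'' and the qualifier ``unique'' corresponds to uniqueness of the connecting path.

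For the forward direction I would assume $G$ is elemental and deduce the two defining properties of a tree separately. Existence of a bijection between every pair of parties gives a path between every pair of vertices, so $G$ is connected. For acyclicity I would argue by contradiction: a cycle in $G$ produces two distinct paths between some pair $v_p, v_q$, and hence two distinct prescriptions for the bijection between $P$ and $Q$, contradicting the uniqueness demanded by the elemental condition. Connected and acyclic then yields that $G$ is a tree. For the converse I would assume $G$ is a tree and verify the elemental condition directly: since a tree is connected, any two vertices $v_p, v_q$ are joined by a path, so composing along it produces a bijection between $P$ and $Q$ (existence); since a tree contains no cycles, that path is unique, and hence so is the resulting bijection (uniqueness). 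Therefore $\upvarepsilon$ meets the definition of an elemental algorithm and $G$ is an elemental graph.

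The step I expect to be the crux is the acyclicity argument in the forward direction, specifically the claim that two distinct paths between the same pair of vertices genuinely yield distinct bijections. This requires care because, a priori, two different compositions of GS bijections could collapse to the same map for a particular preference profile. I would address this by reading the elemental condition as the requirement that the bijection be unambiguously determined, so that the presence of a second path---equivalently, the nonidentity permutation one could obtain by traversing the induced cycle---already violates uniqueness regardless of whether the two compositions happen to coincide. The remaining implications are the textbook equivalences among connectedness, acyclicity, and unique paths, which I would either cite or dispatch in a single line.
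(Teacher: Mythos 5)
Your proposal is correct and takes essentially the same approach as the paper: connectivity from the existence of bijections between all pairs of parties, acyclicity from the uniqueness requirement, and the converse via the unique path joining any two vertices of a tree. If anything, your explicit handling of the crux---reading ``unique bijection'' as uniqueness of the determining composition (i.e.\ of the path) rather than of the resulting function, since two distinct compositions could coincide as maps for a particular preference profile---is more careful than the paper's own proof, which simply asserts ``since each bijection is unique, there can be no cycles in $G$.''
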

\begin{proof}
This bidirectional proof will be split into two parts. The first will be proven by contradiction, and the second by direct construction.

\textbf{part 1:} Take an elemental graph $G$. Since $e_{pq}$ represents a bijection between $P$ and $Q$, an (undirected) walk $W$ in $G$ establishes a composition of bijections, relating all parties with corresponding vertices in $W$ to each other. Since a bijection is established between all pairs of parties, $G$ must be connected. Further, since each bijection is unique, there can be no cycles in $G$. So $G$ is a tree.

\textbf{part 2:} Take a tree $T=\{ V,E\}$. Let $T$ generate a set $\updelta$ of bijections $GS(P, Q)$ according to the following bijections:
$$V \rightarrow \mathcal{P} : v_p \mapsto P \qquad \quad E \rightarrow \updelta : e_{pq} \mapsto GS(P,Q)$$
Since $T$ is a tree, there exists a unique walk between $v_p$ and $v_q$, corresponding to a unique bijection between $P$ and $Q$. Thus, by definition, $\updelta$ is an elemental algorithm. Since the maps used to generate $\updelta$ from $T$ were bijective, $\updelta$ may be thought to generate $T$ using the inverse maps. Then, since $\updelta$ is an elemental algorithm, $T$ is an elemental graph.
\end{proof}
The formal definition follows from the reasoning in part 2 of Theorem 1. Given a directed tree $T=\{ V,E\}$ with $p$ vertices, an \textbf{elemental algorithm} $\upvarepsilon_T$ over $\mathcal{P}$ is a set of bijections $GS(P, Q)$ generated according to the following bijections:
$$V \rightarrow \mathcal{P} : v_p \mapsto P \qquad \quad E \rightarrow \upvarepsilon_T : e_{pq} \mapsto GS(P,Q)$$
The execution of $\upvarepsilon_T$ remains unchanged; each element of $\upvarepsilon_T(\mathcal{P})$ is executed, generating a matching $\langle\upvarepsilon_T(\mathcal{P})\rangle$. 

It is important to note here that a \textit{directed} tree is specified. The direction of the edge indicates which party proposes, an important fact when executing the algorithm. However, such specification was unnecessary in the proof of Theorem 1 because both are viable elemental algorithms.

Having provided a definition, we now show that the elemental algorithm shares many of the nice properties the GS algorithm has. Let $\mathcal{T}_p$ be the set of all undirected trees with $p$ vertices. Let $\mathcal{E}(P)=\{\upvarepsilon_T(\mathcal{P}) \: |\: T \in \mathcal{T}_p \}$. For ease (given appropriate context), let $\upvarepsilon$, $\mathcal{T}$ and $\mathcal{E}$ be short for $\upvarepsilon_T$, $\mathcal{T}_p$, and $\mathcal{E}(\mathcal{P})$, respectively.

\begin{thm}
All elemental algorithms terminate.
\end{thm}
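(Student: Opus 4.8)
The plan is to reduce termination of an elemental algorithm to termination of its constituent Gale--Shapley bijections. By the formal definition established after Theorem 1, an elemental algorithm $\upvarepsilon_T$ is generated from a directed tree $T$ on $p$ vertices, and its execution amounts to executing each bijection $GS(P,Q) \in \upvarepsilon_T$ in turn. The first step I would take is to observe that a tree on $p$ vertices has exactly $p-1$ edges, so that $|\upvarepsilon_T| = p-1$ is finite; hence the execution of $\upvarepsilon_T$ consists of only finitely many Gale--Shapley runs.

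Next, I would invoke the termination of the classical GS algorithm, which is listed among the properties established in \cite{gs} in the introduction: every execution $GS(P,Q)$ halts (indeed within $O(n^2)$ rounds, the maximum being $n^2 - 2n + 2$). The point worth making explicit here is that the individual GS runs are mutually independent. Each $GS(P,Q)$ operates only on the preference lists $L_x(Q)$ for $x \in P$ and $L_y(P)$ for $y \in Q$, and produces a bijection between $P$ and $Q$ without reference to any other party; consequently the order in which the edges of $T$ are processed is immaterial, and no run can restart, reset, or otherwise interfere with another.

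Finally, I would conclude by the elementary principle that a finite sequence of procedures, each of which terminates, itself terminates: after at most $p-1$ runs, each completing in $O(n^2)$ rounds, every edge of $T$ has yielded its bijection, the full matching $\langle \upvarepsilon_T(\mathcal{P}) \rangle$ is produced, and the algorithm halts. I do not anticipate a genuine obstacle in this argument; the only step requiring real care is justifying that executing $\upvarepsilon_T$ truly decomposes into these independent, non-interacting GS runs, so that termination of the parts delivers termination of the whole, rather than into some interleaved process whose mutual feedback could in principle loop.
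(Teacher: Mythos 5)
Your proposal is correct and follows essentially the same route as the paper's proof: an elemental algorithm consists of exactly $p-1$ instances of the GS algorithm (one per tree edge), each of which terminates by \cite{gs}, so the whole terminates. Your additional remark that the individual GS runs are mutually independent is a reasonable elaboration, but it is not a different argument --- the paper takes this non-interference as immediate from the definition of execution.
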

\begin{proof}
Take $\upvarepsilon \in\mathcal{E}$. By construction, $\upvarepsilon$ contains $p-1$ instances of the GS algorithm. Since the GS algorithm terminates \cite{gs} and $p-1$ is finite, $\upvarepsilon$ must terminate.
\end{proof}

\begin{thm}
All elemental algorithms yield stable matchings.
\end{thm}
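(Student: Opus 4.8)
The plan is to prove stability by contradiction. Suppose $\upvarepsilon = \upvarepsilon_T$ for some tree $T$, and suppose toward contradiction that the generated matching $\mathcal{F} = \langle \upvarepsilon(\mathcal{P}) \rangle$ admits a blocking family $F$. By the definition of blocking, every pair $x, y \in F$ mutually weakly prefers each other to their current relatives, and each $x \in F$ strictly prefers at least one relative in $F$ over its current partner. The key structural fact I would exploit is that the matching $\mathcal{F}$ is \emph{built} out of $p-1$ individual Gale--Shapley bijections indexed by the edges of $T$, and that the relation between any two parties $P, Q$ is obtained by composing the GS bijections along the unique tree-path from $v_p$ to $v_q$.

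First I would isolate an edge that ``witnesses'' the instability. Since $F$ is blocking, pick any $x \in F$; by the second blocking condition there is some $z \in F$ with $z \succ_x rel_\mathcal{F}(x, prt(z))$, so $x$ strictly prefers $z$ to its actual relative in $prt(z)$. The idea is to trace the discrepancy back to a single GS-edge: consider the path in $T$ from $v_{prt(x)}$ to $v_{prt(z)}$. Along this path, the current relatives are determined by composing the edge bijections, so if $x$'s actual relative in $prt(z)$ differs from the preferred $z$, there must be some edge $e_{pq}$ on this path where the GS bijection $GS(P,Q)$ itself is ``violated'' in the sense that the two members matched across that edge would each prefer a different partner. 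The plan is then to show that such a violation contradicts the stability of the single GS instance $GS(P,Q)$, which we may assume by the classical Gale--Shapley theorem cited in the excerpt.

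The hard part, and the main obstacle, will be reconciling the $p$-dimensional blocking condition with the pairwise stability guaranteed edge-by-edge. The subtlety is that the blocking family's mutual preferences are among \emph{all} pairs $x,y \in F$, whereas each GS instance only guarantees stability for \emph{one adjacent pair} of parties; the blocking family could in principle distribute its ``improvement'' across several edges so that no single edge is individually blocked. I would address this by arguing that the first blocking condition (weak mutual preference across all pairs) forces, along the chosen tree-path, a telescoping chain of weak preferences, and the strict improvement somewhere in $F$ must localize to produce a strictly blocking pair for some single GS bijection. Concretely, I expect the argument to hinge on showing that if every adjacent edge on the path were individually stable, then composing the weak inequalities would prevent any strict improvement, contradicting the second blocking condition.

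To make this rigorous I would (i) fix notation for the tree-path $v_{prt(x)} = v_0, v_1, \dots, v_k = v_{prt(z)}$ and the corresponding parties $P_0, \dots, P_k$; (ii) let $x_i = rel_\mathcal{F}(x, P_i)$ be $x$'s actual relatives along the path, noting $x_{i}$ and $x_{i+1}$ are matched by the single bijection $GS(P_i, P_{i+1})$; (iii) observe that the blocking family supplies, for the path endpoints, a preferred alternative, and then invoke stability of each $GS(P_i, P_{i+1})$ to derive that no adjacent pair can be strictly improved by both sides, propagating weak preferences down the chain; and finally (iv) conclude that the existence of a genuinely strict improvement somewhere contradicts one of these edge-level stabilities. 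I anticipate step (iii)--(iv) to be where the real content lies, since it is exactly the place where the multidimensional definition of blocking must be reduced to the two-dimensional GS stability it inherits from.
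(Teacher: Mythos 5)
Your overall strategy---contradiction, the unique tree path between the two parties, and an edge-by-edge appeal to Gale--Shapley stability---is the same as the paper's, but your concrete plan has a genuine gap at exactly the step you flag as ``where the real content lies.'' The pair that blocks a single GS instance cannot be manufactured from the data you propose to track. You follow $x$'s \emph{actual} relatives $x_i = rel_\mathcal{F}(x, P_i)$ along the path, together with the preferences of the two \emph{endpoints} $x$ and $z$; but the intermediate relatives $x_i$ are generally not members of $F$, and the blocking definition says nothing about \emph{their} preferences, so no edge-level blocking pair can be extracted from them. Indeed, your claim that an endpoint discrepancy (``$x$'s actual relative in $prt(z)$ differs from $z$'') forces some edge on the path to be violated is false as stated: with three parties and edges $GS(P_0,P_1)$, $GS(P_1,P_2)$, a pair $(x,z) \in P_0 \times P_2$ may mutually strictly prefer each other to their assigned relatives while the matching remains perfectly stable, because stability in this problem concerns blocking \emph{families}, and the $P_1$-member needed to complete such a family may refuse to cooperate. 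A discrepancy across a non-edge of the tree is not, by itself, a violation of anything.

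The missing idea---and what the paper's proof actually does---is to use the members of the blocking family itself in the intermediate parties: let $f_i$ denote the member of $F$ in $P_i$. The first blocking condition applies to \emph{all} pairs of $F$, in particular to each adjacent pair $(f_i, f_{i+1})$, giving mutual weak preference over assigned relatives. Since preference lists are strict and each edge's matching is a bijection, either $f_i$ and $f_{i+1}$ are matched to each other under $GS(P_i, P_{i+1})$, or both weak preferences are strict, in which case $(f_i, f_{i+1})$ is a blocking pair for that single GS instance, contradicting its stability. Hence every adjacent pair of $F$-members is matched, and telescoping along the path gives that the members of $F$ are all relatives of one another; in particular $z = rel_\mathcal{F}(x, prt(z))$, contradicting the strict preference $z \succ_x rel_\mathcal{F}(x, prt(z))$ you started from (the paper phrases the contradiction instead as $F \in \langle\upvarepsilon\rangle$ against $F \notin \langle\upvarepsilon\rangle$). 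Note also that no ``propagation'' of weak preferences down the chain is needed: the blocking definition hands you the inequality at every edge directly, and the argument converts each inequality into an equality of partners. Your narrative paragraph gestures at the right ingredient (``weak mutual preference across all pairs''), but your concrete steps (ii)--(iv) discard it in favor of the actual-relative chain, and with that choice the proof cannot be completed.
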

\begin{proof}
Consider an arbitrary $\upvarepsilon \in \mathcal{E}$. Suppose $\langle\upvarepsilon\rangle$ is unstable; there exists a blocking family $F \notin \langle\upvarepsilon\rangle$. Take $x,y \in F$ where $x \in P, y \in Q$.

\textbf{case 1}: $GS(P, Q) \in \upvarepsilon$. Since $GS(P, Q)$ is stable \cite{gs}, we have $x = rel_{\langle\upvarepsilon\rangle}(y, P)$.

\textbf{case 2}: $GS(P, Q) \notin \upvarepsilon$. Take the tree $T=\{ V,E \}$ that generated $\upvarepsilon$. There is a unique walk $W$ in $T$ from $v_p$ to $v_q$. By relabeling vertices, without loss of generality, let
\begin{center}
$W=u_1 u_2 u_3 \cdots u_w$ \quad for some $1 < w < p$, where $u_1=v_p$ and $u_w=v_q$
\end{center}
Relabeling parties accordingly, we have that
\begin{center}
$GS(P_k, P_{k+1}) \in \upvarepsilon$ \quad for $1 \leq k < w$, where $P_1=P$ and $P_w=Q$
\end{center}
We now apply case 1 $w-1$ times on $x_k,x_{k+1}$, where $x_k \in P_k, x_1=x$ and $x_w=y$. This gives $x_k=rel_{\langle\upvarepsilon\rangle}(x_{k+1}, P_k)$. Then by transitivity, $x_1=rel_{\langle\upvarepsilon\rangle}(x_w, P_1)$, or $x=rel_{\langle\upvarepsilon\rangle}(y, P)$.

Combining both cases, we have that $F \in \langle\upvarepsilon\rangle$, which is a contradiction. Therefore $\langle\upvarepsilon\rangle$ is stable.
\end{proof}

For 3DSM, this setup was criticized for its simplicity: ``It is not hard to see that we can apply the Gale-Shapley algorithm twice to get a weak stable matching: letting the men propose to women and then propose to dogs. Women and dogs make the decision of acceptance or rejection based on their simple lists of men." \cite{huang} However, previous literature has overlooked the multitude of combinations of matchings available, allowing one to customize the algorithm to fit the given task at hand. The following result computes the exact number of such options.

\begin{thm}
$|\mathcal{E}|=2^{p-1}p^{p-2}$
\end{thm}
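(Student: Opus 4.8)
The plan is to exhibit a bijection between $\mathcal{E}$ and the set of directed labeled trees on $p$ vertices, and then count those trees. First I would pin down the bijection: by Theorem 1, the elemental graphs are exactly the trees on the vertex set $\{\,v_p : P \in \mathcal{P}\,\}$, and the formal definition assigns to each directed tree $T$ a unique elemental algorithm $\upvarepsilon_T$ via the edge map $e_{pq} \mapsto GS(P,Q)$. Since this map records both the unordered pair $\{P,Q\}$ and the orientation (i.e.\ which party proposes), distinct directed trees produce distinct sets of $GS$-bijections, so $T \mapsto \upvarepsilon_T$ is injective; the construction in part 2 of Theorem 1 shows it is also surjective onto $\mathcal{E}$. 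Hence $|\mathcal{E}|$ equals the number of directed labeled trees on $p$ vertices. Here the vertices are genuinely labeled, because the parties are pairwise distinguishable, and the edges are genuinely directed, because the orientation determines the proposing party.

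Next I would count these directed trees by separating the underlying (undirected) tree from its orientation. By Cayley's formula, the number of labeled trees on $p$ vertices is $p^{p-2}$. Every tree on $p$ vertices has exactly $p-1$ edges, and each edge may be oriented independently in one of two directions, contributing a factor of $2^{p-1}$. Since the choice of underlying tree and the choice of orientation are independent, multiplying gives
$$|\mathcal{E}| = p^{p-2} \cdot 2^{p-1} = 2^{p-1} p^{p-2}.$$

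The only genuinely external input is Cayley's formula, so the main step to get right is the bijection bookkeeping—in particular, confirming that orientations must be counted (so that the bare $p^{p-2}$ undirected trees get multiplied by $2^{p-1}$) and that no two directed trees collapse onto the same elemental algorithm. Everything else reduces to the standard fact that a tree on $p$ vertices has $p-1$ edges, together with the independence of the $p-1$ binary orientation choices.
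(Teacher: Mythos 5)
Your proposal is correct and follows essentially the same route as the paper: identify $\mathcal{E}$ with the set of directed labeled trees on $p$ vertices via $T \mapsto \upvarepsilon_T$, then count those trees as (undirected tree by Cayley's formula, $p^{p-2}$ choices) times (orientation of the $p-1$ edges, $2^{p-1}$ choices). Your version is somewhat more careful than the paper's in spelling out the injectivity and surjectivity of the correspondence, but the underlying argument is the same.
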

\begin{proof}
Since each $T \in \mathcal{T}$ generates a distinct $\upvarepsilon (T) \in \mathcal{E}$, we have that $|\mathcal{T}|=|\mathcal{E}|$. By Cayley's Formula, there are $p^{p-2}$ undirected trees with $p$ vertices. Since there are $p-1$ edges to a tree, each undirected tree corresponds to $2^{p-1}$ distinct directed trees. Thus $|\mathcal{E}|=|\mathcal{T}|=2^{p-1}p^{p-2}$.
\end{proof}

\begin{thm}
An elemental algorithm is $O(pn^2)$.
\end{thm}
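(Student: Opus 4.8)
The plan is to bound the running time by decomposing an elemental algorithm into its constituent Gale--Shapley executions and summing their individual costs. By construction, an elemental algorithm $\upvarepsilon \in \mathcal{E}$ is a set of bijections generated from a directed tree $T$ on $p$ vertices, and since a tree on $p$ vertices has exactly $p-1$ edges, $\upvarepsilon$ consists of precisely $p-1$ instances of $GS(P,Q)$. This structural fact, already established in the proof of Theorem 2, is the natural starting point.

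First I would invoke the known complexity of the Gale--Shapley algorithm from \cite{gs}: a single execution $GS(P,Q)$ between two parties of size $n$ runs in $O(n^2)$ time, since the maximal number of rounds is $n^2 - 2n + 2$. Next I would observe that the $p-1$ executions comprising $\upvarepsilon$ are carried out independently, one per edge of $T$, so the total time is the sum of the individual running times. This gives a bound of $(p-1) \cdot O(n^2) = O(pn^2)$, which is the claimed result. One should also account for the final step of reading off the matching $\langle\upvarepsilon\rangle$ from the collection of pairwise bijections, but traversing the tree to compose bijections and assemble the $n$ families costs at most $O(pn)$, which is absorbed into the $O(pn^2)$ term.

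The only subtlety I would want to address carefully is whether the input size or the preference arrays introduce hidden costs. Each element carries a $(p-1)\times n$ preference array, but any single $GS(P,Q)$ only consults the two relevant columns of preferences, so no execution sees more than the $O(n^2)$ work inherent to two-party Gale--Shapley. Thus the decomposition is genuinely a sum of $p-1$ independent $O(n^2)$ computations and no interaction between the bijections inflates the cost.

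The main obstacle, such as it is, is really just bookkeeping rather than a deep difficulty: one must be confident that composing bijections along tree paths to recover the full matching does not dominate, and that invoking Theorem 2's termination guarantee legitimately lets us treat each of the $p-1$ GS calls as a completed $O(n^2)$ subroutine. Given these observations, the proof is a short multiplication of the per-edge cost by the edge count, and I expect it to be essentially immediate once the $p-1$-edge structure and the $O(n^2)$ GS bound are in hand.
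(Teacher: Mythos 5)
Your proposal is correct and follows essentially the same argument as the paper: an elemental algorithm consists of $p-1$ Gale--Shapley executions, each $O(n^2)$, giving $(p-1)\cdot O(n^2) = O(pn^2)$. Your extra bookkeeping observations (the $O(pn)$ cost of assembling families along tree paths, and that each $GS(P,Q)$ consults only two preference columns) are sound refinements the paper leaves implicit, but they do not change the route.
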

\begin{proof}
Recall that the GS algorithm is $O(n^2)$ and the maximal number of rounds is $n^2 - 2n + 2$. Since an elemental algorithm applies the GS algorithm $p-1$ times, the maximal number of rounds in an elemental algorithm is $(p-1)(n^2 - 2n + 2)$. Therefore, it is $O(pn^2)$.
\end{proof}

\section{Problem Structure}
Before defining compound algorithms formally, we need to develop a vocabulary for some of the natural structure of a multidimensional stable marriage instance.

Given a $p$DSM $\mathcal{P}$, a $q$DSM $\mathcal{Q}$ is said to be a \textbf{subproblem} of $\mathcal{P}$ if and only if $\mathcal{Q} \subseteq \mathcal{P}$. Given $\mathcal{P}$, a set partition $\pi(\mathcal{P})\neq \mathcal{P}, \{\mathcal{P}\}$ is said to be a \textbf{problem partition} containing subproblems of $\mathcal{P}$. Let $\Pi(\mathcal{P})$ be the set of all problem partitions. For ease, let $\pi$, $\Pi$ be short for $\pi(\mathcal{P})$, $\Pi(\mathcal{P})$, respectively.

Given a problem partition $\pi(\mathcal{P})$ containing $1\leq p' < p$ subproblems with a matching $\mathcal{F}$ over each $\mathcal{Q} \in \pi$, then a \textbf{reduced problem} $(\mathcal{P'}, L')$ is a $p'$DSM where $\mathcal{P'}=\pi$ and $L'$ is defined as follows. Given matchings $\mathcal{F}, \mathcal{G}$ for subproblems $\mathcal{Q}, \mathcal{R} \in \mathcal{P'}$, respectively, for all families $F \in \mathcal{F}, G \in \mathcal{G}$, let
$$L'_F(G)=\sum\limits_{x \in F, \: y \in G} L_x(y)$$
$$L'_F(\mathcal{R})=\{G \in \mathcal{G} \: | \: \text{strictly ordered according to } L'_F(G) \}$$
$$L'_F = \{ L'_F(\mathcal{R}) \: | \: \mathcal{R} \in \mathcal{P'}\}$$
$$L' = \{ L'_F \: | \: F \in U_\mathcal{P'} \}$$
The definition of the reduced problem effectively collapses each subproblem $\mathcal{Q} \in \mathcal{P'}$ into a single party and each family $F \in \mathcal{F}$ into a single individual.


Given a matching $\mathcal{F'}$ over reduced problem $(\mathcal{P'}, L')$, $\mathcal{F'}$ may be \textbf{expanded} to $\mathcal{F}$ according to 
$$\mathcal{F}=\bigcup_{F'\in\mathcal{F'}} F'$$
to give the equivalent matching $\mathcal{F}$ over the original problem $(\mathcal{P}, L)$.

\section{Compound Algorithms}

Given a $p$DSM $(\mathcal{P}, L)$, a \textbf{compound algorithm} $C$ is executed over $\mathcal{P}$ according to the following recursive procedure. Two counters $i, c$ are both intially set to zero.
\begin{enumerate}
\item Take an elemental algorithm $\upvarepsilon \in \mathcal{E}(\mathcal{P})$. If $\Pi(\mathcal{P})\neq\emptyset$, take a problem partition $\pi \in \Pi(\mathcal{P})$. Else, let $\langle C\rangle=\langle\upvarepsilon\rangle$ and go to step 6.
\item Create reduced problem $(\mathcal{P'}, L')$: set $\mathcal{P'}=\pi$; for each subproblem $\mathcal{Q} \in \pi$, take an elemental algorithm $\upvarepsilon\in\mathcal{E}(\mathcal{Q})$ and generate $L'$ using matchings $\langle\upvarepsilon(\mathcal{Q})\rangle$.
\item Index $i$ by one.
\item If $|\pi|>1$, repeat from step 1 letting $(\mathcal{P}, L)=(\mathcal{P'}, L')$. Else, let $c=i$.
\item Given $\langle\upvarepsilon(\mathcal{P'}) \rangle$ used in the $c^{th}$ execution of step 2, construct $\langle C\rangle$ as the result of expanding $\langle\upvarepsilon(\mathcal{P'}) \rangle \: c$ times.
\item Return $\langle C\rangle$.
\end{enumerate}

Having provided a definition, we now show that compound algorithms share many of the nice properties the elemental algorithms have. Let $\mathcal{C}(\mathcal{P})$ be the set of all compound algorithms over $\mathcal{P}$.
\begin{thm}
All compound algorithms terminate.
\end{thm}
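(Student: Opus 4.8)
The plan is to argue that the recursive procedure defining a compound algorithm bottoms out after finitely many recursive calls, and that each individual call performs only finitely much work. First I would observe that, for a fixed invocation of steps 1 through 4 on a $p$DSM, every operation is finite: step 1 selects and executes a single elemental algorithm over $\mathcal{P}$, which terminates by Theorem 2; step 2 executes one elemental algorithm over each of the $p'$ subproblems $\mathcal{Q} \in \pi$ in order to build the reduced preference lists $L'$, and since there are finitely many subproblems and each elemental algorithm terminates (again by Theorem 2), the construction of $(\mathcal{P'}, L')$ is finite; steps 3 and 4 merely increment a counter and test $|\pi|$.

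The crucial observation is that the dimension strictly decreases with each recursion. When step 4 recurses, it does so with $|\pi| > 1$ and replaces $(\mathcal{P}, L)$ by the reduced problem $(\mathcal{P'}, L')$, whose set of parties is $\mathcal{P'} = \pi$. By the definition of a problem partition, the reduced problem contains $1 \leq p' < p$ subproblems, so the new dimension satisfies $p' < p$ strictly. Hence, letting $p_0 = p$ denote the initial dimension and $p_0 > p_1 > p_2 > \cdots$ the dimensions encountered on successive recursive calls, we obtain a strictly decreasing sequence of positive integers. Such a sequence is necessarily finite, so the recursion reaches a base case after at most $p - 1$ steps.

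I would then check that each base case halts the procedure. The recursion terminates either at step 1, when $\Pi(\mathcal{P}) = \emptyset$ (no nontrivial problem partition exists, as occurs once the dimension is small enough), in which case $\langle C\rangle = \langle\upvarepsilon\rangle$ is set and control jumps to step 6; or at step 4, when $|\pi| = 1$, in which case the counter is fixed at $c = i$ and control passes to the expansion in step 5. In either case no further recursive call is made. Finally, step 5 expands $\langle\upvarepsilon(\mathcal{P'})\rangle$ exactly $c$ times, where $c = i$ equals the (finite) recursion depth; each expansion is the finite union $\mathcal{F} = \bigcup_{F' \in \mathcal{F'}} F'$, so this stage is also finite. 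Combining finite work per level with a finite number of levels yields termination.

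I expect the main obstacle to be pinning down precisely why the dimension strictly decreases and why a base case is always reached, that is, tying the informal recursion of steps 1 through 6 to a well-founded quantity (the dimension $p$) so that the strictly-decreasing-positive-integer argument applies cleanly. Once that is in place, verifying that $c$ is finite is immediate, since $c$ equals the recursion depth, which is bounded by the length of that strictly decreasing sequence.
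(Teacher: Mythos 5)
Your proof is correct, and it rests on the same underlying fact as the paper's --- a problem partition is nontrivial, so the number of parties can only drop finitely often --- but the two arguments track different recursions, and yours is actually the more faithful to the algorithm as written. The paper proves the theorem by strong induction on $|\mathcal{P}|$: the base case $|\mathcal{P}|=2$ has $\Pi(\mathcal{P})=\emptyset$, and the inductive step invokes the exclusion $\pi \neq \{\mathcal{P}\}$ to conclude that every block $\mathcal{Q} \in \pi$ is a proper subset with $|\mathcal{Q}| < |\mathcal{P}|$, then appeals to the inductive hypothesis that $C(\mathcal{Q})$ terminates for each block $\mathcal{Q}$. Strictly speaking, though, step 2 of the algorithm only runs \emph{elemental} algorithms on the blocks $\mathcal{Q} \in \pi$; the genuine recursion happens in step 4, which restarts the procedure on the reduced problem $(\mathcal{P'}, L')$ with $\mathcal{P'} = \pi$. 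That is the recursion you follow: you use the other half of the exclusion, $\pi \neq \mathcal{P}$ (equivalently, the stipulation $1 \leq p' < p$ in the definition of a reduced problem), to get $|\pi| < p$, hence a strictly decreasing sequence of dimensions and recursion depth at most $p-1$. You also make explicit two points the paper leaves implicit: that the work within each level is finite (each elemental algorithm terminates by Theorem 2), and that the counter $c$ appearing in step 5 is finite because it equals the recursion depth, so the final expansion stage halts as well. What the paper's formulation buys is brevity and a clean inductive statement; what yours buys is a tighter correspondence with steps 1--6 of the procedure, including an identification of both base cases ($\Pi(\mathcal{P})=\emptyset$ at step 1, or $|\pi|=1$ at step 4) and an explicit bound on the depth.
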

\begin{proof}
The statement will be proven by induction on $|\mathcal{P}|$. Take $C \in \mathcal{C}(\mathcal{P})$.

\textbf{Base case:} $|\mathcal{P}|=2$

Let $\mathcal{P}=\{P,Q\}$. Without loss of generality, $\upvarepsilon=GS(P,Q)$. The only possible set partitions are $\mathcal{P}, \: \{\mathcal{P}\}$, neither of which are problem partitions. Thus $\Pi(\mathcal{P})=\emptyset$; for the execution of $C$, we go from step 1 to step 6, return $\langle GS(P,Q)\rangle$, and terminate.

\textbf{Inductive step:} Given $|\mathcal{P}|=k+1$ and $C(\mathcal{Q})$ terminates when $|\mathcal{Q}|=i$ for all $2 \leq i \leq k$.

Recall that a problem partition omits the trivial partition $\pi=\{\mathcal{P}\}$. This ensures, for all $\pi \in \Pi(\mathcal{P})$, that $|\mathcal{P}| > |\mathcal{Q}|$ for all $\mathcal{Q} \in \pi$. Thus $k+1 > |\mathcal{Q}|$, meaning that $|\mathcal{Q}|=i$ for some $2 \leq i \leq k$. By the inductive hypothesis, $C(\mathcal{Q})$ terminates. Therefore, so does $C(\mathcal{P})$.
\end{proof}

Before showing stability, we need a lemma.

\newtheorem{lma}{Lemma}
\begin{lma}
Given $p$DSM $(\mathcal{P}, L)$ reduced to $(\mathcal{P'}, L')$ using stable matchings $\mathcal{F}_\mathcal{Q}$ for all $\mathcal{Q} \in \mathcal{P'}$. If matching $\mathcal{F'}$ over $\mathcal{P'}$ is stable, then the expanded matching $\mathcal{F}$ over $\mathcal{P}$ is stable.
\end{lma}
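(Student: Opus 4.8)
The plan is to argue by contradiction, in the same spirit as the proof of Theorem 3. Assume the expanded matching $\mathcal{F}$ is unstable, so there is a blocking family $F \notin \mathcal{F}$. For each subproblem $\mathcal{Q} \in \mathcal{P}'$ set $F_\mathcal{Q} = F \cap U_\mathcal{Q}$; since $F$ contains exactly one element of each party, $F_\mathcal{Q}$ is a family of $\mathcal{Q}$ and $F = \bigcup_{\mathcal{Q} \in \mathcal{P}'} F_\mathcal{Q}$. The aim is to push the instability either down into a single subproblem (contradicting stability of some $\mathcal{F}_\mathcal{Q}$) or up into the reduced problem (contradicting stability of $\mathcal{F}'$). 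The enabling preliminary observation is that expansion preserves relatives inside a subproblem: every family of $\mathcal{F}$ is a union of one family from each $\mathcal{F}_\mathcal{Q}$, so for any $x \in U_\mathcal{Q}$ and party $P \in \mathcal{Q}$ we have $rel_\mathcal{F}(x,P) = rel_{\mathcal{F}_\mathcal{Q}}(x,P)$. Consequently the weak half of the blocking condition restricts cleanly: for $x,y$ in the same $\mathcal{Q}$, the inequalities $x \succeq_y rel_\mathcal{F}\big(y,prt(x)\big)$ and $y \succeq_x rel_\mathcal{F}\big(x,prt(y)\big)$ are literally statements about $\mathcal{F}_\mathcal{Q}$. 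I would then split on whether $F_\mathcal{Q} \in \mathcal{F}_\mathcal{Q}$ holds for every $\mathcal{Q}$.

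First I would treat the case where some $F_\mathcal{Q} \notin \mathcal{F}_\mathcal{Q}$, and claim this $F_\mathcal{Q}$ already blocks $\mathcal{F}_\mathcal{Q}$. The weak condition holds by the observation above, so only the second (strict) condition needs work, and I would argue it contrapositively. If some $x \in F_\mathcal{Q}$ admitted no strict witness inside $F_\mathcal{Q}$, then for every $z \in F_\mathcal{Q}$ the failure of $z \succ_x rel_{\mathcal{F}_\mathcal{Q}}\big(x,prt(z)\big)$ combined with the weak inequality forces $L_x(z) = L_x\big(rel_{\mathcal{F}_\mathcal{Q}}(x,prt(z))\big)$; since each $L_x$ is a strictly ordered (hence injective) array, this gives $z = rel_{\mathcal{F}_\mathcal{Q}}\big(x,prt(z)\big)$ for all $z \in F_\mathcal{Q}$, meaning $F_\mathcal{Q}$ is exactly $x$'s family in $\mathcal{F}_\mathcal{Q}$ and so $F_\mathcal{Q} \in \mathcal{F}_\mathcal{Q}$, a contradiction. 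Hence $F_\mathcal{Q}$ satisfies both blocking conditions, contradicting the stability of $\mathcal{F}_\mathcal{Q}$.

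Next I would treat the case where every $F_\mathcal{Q} \in \mathcal{F}_\mathcal{Q}$. Then each $F_\mathcal{Q}$ is a member of the reduced party $\mathcal{Q}$, so $F' = \{F_\mathcal{Q} : \mathcal{Q} \in \mathcal{P}'\}$ is a family of $\mathcal{P}'$, and $F' \notin \mathcal{F}'$ since otherwise expanding $F'$ would force $F \in \mathcal{F}$. I would show $F'$ blocks $\mathcal{F}'$. For the weak condition, fix $\mathcal{Q},\mathcal{R}$ and let $H = rel_{\mathcal{F}'}(F_\mathcal{R}, \mathcal{Q})$; expansion identifies the members of $H$ with the relatives of each $x \in F_\mathcal{R}$ in the parties of $\mathcal{Q}$, so summing the pairwise inequalities $L_x(y) \le L_x\big(rel_\mathcal{F}(x,prt(y))\big)$ over $x \in F_\mathcal{R},\, y \in F_\mathcal{Q}$ yields $L'_{F_\mathcal{R}}(F_\mathcal{Q}) \le L'_{F_\mathcal{R}}(H)$, which is precisely the reduced weak preference from the definition of $L'$. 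For the strict condition, fix $F_\mathcal{Q}$, pick any $x \in F_\mathcal{Q}$ and its global witness $z$; because $F_\mathcal{Q} \in \mathcal{F}_\mathcal{Q}$ makes all intra-$\mathcal{Q}$ relatives equalities, $z$ cannot lie in $F_\mathcal{Q}$, so $z \in F_\mathcal{R}$ for some $\mathcal{R} \neq \mathcal{Q}$. The strict pairwise inequality for $(x,z)$ together with the weak inequalities for the remaining pairs sums to a strict inequality $L'_{F_\mathcal{Q}}(F_\mathcal{R}) < L'_{F_\mathcal{Q}}\big(rel_{\mathcal{F}'}(F_\mathcal{Q},\mathcal{R})\big)$, supplying $F_\mathcal{Q}$ the strict witness $F_\mathcal{R}$ required in the reduced problem. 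Thus $F'$ blocks $\mathcal{F}'$, contradicting its stability.

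I expect the strict second blocking condition to be the crux in both cases, and the weak halves to follow routinely from relative-preservation and monotonicity of the sum defining $L'$. In the first case the real content is the injectivity argument turning ``not blocking'' into ``is an honest sub-family''; in the second case it is locating the strict witness in a different subproblem and verifying that one strictly favorable pairwise term survives the summation precisely because every other term is at least weakly favorable. Together the two cases exhaust the dichotomy, so either horn contradicts an assumed stability and forces $\mathcal{F}$ to be stable.
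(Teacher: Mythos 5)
Your proof is correct, and it is organized differently from the paper's --- in a way that actually makes it more complete. The paper fixes a single pair of subproblems $\mathcal{Q},\mathcal{R}\in\mathcal{P'}$ and works entirely inside the restriction $\mathcal{F}\cap(\mathcal{Q}\cup\mathcal{R})$: it asserts in one sentence that, because $\mathcal{F}_\mathcal{Q}$ and $\mathcal{F}_\mathcal{R}$ are stable, any blocking family must have the form $(G\cap\mathcal{Q})\cup(H\cap\mathcal{R})$ with $G,H\in\mathcal{F'}$, and then performs the same summation you perform, exhibiting a blocking pair in the two-party restriction of the reduced problem. Your argument differs in two respects, and each difference buys something. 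First, your Case 1 (the injectivity step showing that an element of $F_\mathcal{Q}$ with no strict witness inside $F_\mathcal{Q}$ forces $F_\mathcal{Q}\in\mathcal{F}_\mathcal{Q}$) is precisely the justification that the paper's one-sentence assertion needs but does not supply: the strict witness of an element of $F\cap U_\mathcal{Q}$ may lie in the other subproblem, so $F\cap U_\mathcal{Q}$ is not automatically a blocking family for $\mathcal{F}_\mathcal{Q}$, and the dichotomy you prove is what fills that hole. Second, by assembling a blocking family $F'=\{F_\mathcal{Q} \,:\, \mathcal{Q}\in\mathcal{P'}\}$ in the \emph{full} reduced problem rather than in a two-party restriction, you avoid an inference the paper leaves implicit: the paper literally proves only that every pairwise restriction of $\mathcal{F}$ is stable, and for $|\mathcal{P'}|\geq 3$ this does not immediately give stability of $\mathcal{F}$, since a blocking family spread over three or more subproblems could have its strict witnesses arranged cyclically (the witness for the $\mathcal{Q}_1$-part lying in $\mathcal{Q}_2$, for the $\mathcal{Q}_2$-part in $\mathcal{Q}_3$, for the $\mathcal{Q}_3$-part in $\mathcal{Q}_1$), in which case no pair restriction need be blocking. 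Your global construction requires only that each $F_\mathcal{Q}$ have \emph{some} strict witness in \emph{some} other subproblem, which is exactly what the definition of blocking demands, so your route handles configurations that the paper's pairwise framing does not explicitly address.
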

\begin{proof}
The proof will be given by taking any $\mathcal{Q}, \mathcal{R} \in \mathcal{P'}$ and showing that the expansion of $\mathcal{F'} \cap (\mathcal{Q} \cup \mathcal{R})$, which is $\mathcal{F} \cap (\mathcal{Q} \cup \mathcal{R})$, is stable. For ease of notation within the proof itself, assume that $\mathcal{F'}$ stands for $\mathcal{F'} \cap (\mathcal{Q} \cup \mathcal{R})$, and $\mathcal{F}$ for $\mathcal{F} \cap (\mathcal{Q} \cup \mathcal{R})$. 

Take $\mathcal{Q},\mathcal{R} \in \mathcal{P'}$. Let $\mathcal{F}_\mathcal{Q},\mathcal{F}_\mathcal{R}$ be the stable matchings over $\mathcal{Q},\mathcal{R}$, respectively.

Suppose $\mathcal{F}$ is unstable; there exists a blocking family $F \notin \mathcal{F}$. Since $F$ is blocking and $\mathcal{F'}$ is stable, $F$ cannot be an expansion of some $F'\in\mathcal{F'}$. Therefore, since $\mathcal{F}_\mathcal{Q},\mathcal{F}_\mathcal{R}$ are stable, $F=(G \cap \mathcal{Q}) \cup (H \cap \mathcal{R})$, for some $G, H \in \mathcal{F'}$.

For ease of notation, let
$$F_1=G \cap \mathcal{Q},\; F_2=G \cap \mathcal{R},\; F_3=H \cap \mathcal{Q},\;  F_4=H \cap \mathcal{R}$$
Thus, without loss of generality, $F=F_1 \cup F_4$. By definition of $\mathcal{P'}$, we have that $F_1, F_2, F_3, F_4 \in U_\mathcal{P'}$.

Consider $x \in F_1, y \in F_4$. Since $x,y \in F$ and $F$ is blocking, we have that $x \succeq_y rel_{\mathcal{F}}\big( y, prt(x)\big)$ and $y \succeq_x rel_{\mathcal{F}}\big( x, prt(y)\big)$. This means that 
$$L_x(y) \leq L_x\Big( rel_{\mathcal{F}}\big( x, prt(y)\big) \Big)$$
$$L_y(x) \leq L_y\Big( rel_{\mathcal{F}}\big( y, prt(x)\big) \Big)$$
Summing over all such $x,y$ pairs,
$$\sum_{x \in F_1 ,\: y \in F_4} L_x(y) \leq \sum_{x \in F_1 ,\: y \in F_4} L_x\Big( rel_{\mathcal{F}}\big( x, prt(y)\big) \Big)$$
$$\sum_{x \in F_1 ,\: y \in F_4} L_y(x) \leq \sum_{x \in F_1 ,\: y \in F_4} L_y\Big( rel_{\mathcal{F}}\big( y, prt(x)\big) \Big)$$
Additionally, for each $x \in F_1, y \in F_4$ there exist $z \in F_4, w \in F_1$ such that $z \succ_x rel_{\mathcal{F}}(x, prt(z))$ and $w \succ_y rel_{\mathcal{F}}(y, prt(w))$. This implies that
$$\sum_{x \in F_1 ,\: y \in F_4} L_x(y) < \sum_{x \in F_1 ,\: y \in F_4} L_x\Big( rel_{\mathcal{F}}\big( x, prt(y)\big) \Big)$$
$$\sum_{x \in F_1 ,\: y \in F_4} L_y(x) < \sum_{x \in F_1 ,\: y \in F_4} L_y\Big( rel_{\mathcal{F}}\big( y, prt(x)\big) \Big)$$
Then, by definition of $L'$, we have
$$L'_{F_1}(F_4) < L'_{F_1}(F_2)$$
$$L'_{F_4}(F_1) < L'_{F_4}(F_3)$$
But this contradicts the fact that $\mathcal{F'}$ is stable. Therefore there can be no blocking family in $\mathcal{F}$, and thus $\mathcal{F}$ is stable.
\end{proof}

\begin{thm}
All compound algorithms are stable.
\end{thm}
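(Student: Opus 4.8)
The plan is to mirror the inductive structure used for Theorem 6 (termination of compound algorithms) and carry out a strong induction on $p = |\mathcal{P}|$, letting Lemma 1 do the essential work at each level. The guiding intuition is that a compound algorithm amounts to a single problem reduction followed by a recursive compound matching of the reduced problem, so stability should propagate upward one reduction at a time: the subproblem matchings used to reduce are elemental, hence stable by Theorem 3, and the matching of the reduced problem is stable by the inductive hypothesis, whereupon Lemma 1 certifies that the expansion is stable.

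First I would dispatch the base case $p=2$. Here $\Pi(\mathcal{P}) = \emptyset$, so the procedure passes from step 1 directly to step 6 and outputs $\langle C\rangle = \langle\upvarepsilon\rangle$ for some elemental algorithm $\upvarepsilon$; this is stable by Theorem 3. The same observation settles any $\mathcal{P}$ with $\Pi(\mathcal{P}) = \emptyset$.

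For the inductive step, I would assume that every compound algorithm over a $q$DSM with $2 \le q < p$ produces a stable matching, and let $C$ run on a $p$DSM $(\mathcal{P}, L)$ with $\Pi(\mathcal{P}) \neq \emptyset$. Then $C$ fixes a problem partition $\pi = \{\mathcal{Q}_1, \dots, \mathcal{Q}_{p'}\}$ and, for each $\mathcal{Q}_j$, an elemental matching $\langle\upvarepsilon(\mathcal{Q}_j)\rangle$, which is stable by Theorem 3. These matchings reduce $(\mathcal{P}, L)$ to the $p'$DSM $(\mathcal{P'}, L')$ with $\mathcal{P'} = \pi$, and the remainder of the procedure is exactly a compound algorithm run on $(\mathcal{P'}, L')$, returning a matching $\mathcal{F'}$ over $\mathcal{P'}$. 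Since a problem partition contains $p' < p$ subproblems by definition, the inductive hypothesis applies and $\mathcal{F'}$ is stable. Feeding the stable subproblem matchings $\langle\upvarepsilon(\mathcal{Q}_j)\rangle$ and the stable reduced matching $\mathcal{F'}$ into Lemma 1 then gives that the expansion $\mathcal{F}$ of $\mathcal{F'}$ is stable over $\mathcal{P}$; since $\langle C\rangle = \mathcal{F}$, the output is stable, completing the induction.

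The main obstacle is conceptual rather than computational: one must verify that the tail of the compound procedure on $(\mathcal{P'}, L')$ genuinely is a compound algorithm over a strictly smaller problem, so that the induction is well founded and the two hypotheses of Lemma 1 (stable subproblem matchings together with a stable reduced matching) are met precisely. The bound $p' < p$, guaranteed by the exclusion of the trivial and discrete partitions in the definition of a problem partition, is exactly what makes the recursion descend to the elemental base case, so I would invoke that exclusion explicitly. Beyond this bookkeeping, the argument is a direct appeal to Theorem 3 for the subproblem matchings and to Lemma 1 for the single-level expansion.
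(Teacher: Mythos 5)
Your proposal is correct and takes essentially the same approach as the paper: both arguments rest on Theorem 3 (stability of the elemental matchings used at each level) combined with Lemma 1 to push stability through each expansion back to the original problem. The only difference is organizational --- the paper unwinds the recursion iteratively, applying Lemma 1 $c$ times on the final elemental matching $\langle\upvarepsilon(\mathcal{P'})\rangle$, while you package the same repetition as a strong induction on $|\mathcal{P}|$ (mirroring the proof of Theorem 6), which amounts to one application of Lemma 1 per inductive step.
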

\begin{proof}
This statement will be proven by tracing the steps of the compound algorithm and proving that stability is preserved throughout. Take $C \in \mathcal{C}(\mathcal{P})$. 
Upon executing the algorithm, let $c$ be the final number of times that steps $1-4$ were repeated.

Nothing occurs with regards to stability in steps 1, 3, 4, and 6. In step 2, by Theorem 3 each matching $\langle\upvarepsilon(\mathcal{Q})\rangle$ is stable for all $\mathcal{Q} \in \pi$. In step 5, $\langle\upvarepsilon(\mathcal{P'})\rangle$ is stable by Theorem 3. The initial algorithm conditions and steps 2 and 5 satisfy the givens for Lemma 1, so by applying Lemma 1 $c$ times on $\langle\upvarepsilon(\mathcal{P'})\rangle$, we have that the $c^{th}$ expanded matching of $\langle\upvarepsilon(\mathcal{P'})\rangle$, $\langle C\rangle$, is stable over the original $\mathcal{P}$.
\end{proof}

\section{Open Problems}
There are several interesting questions we can now ask.
\begin{itemize}
\item determine optimal and pessimal matchings for a given party
\item determine egalitarian matchings
\item determine efficiency of the compound algorithm
\item determine additional structure of matchings under elemental and compound algorithms (dependent on chosen directed tree, etc.)
\item find applications
\end{itemize}

\end{document}